\newcommand{\F}{\mathbb{F}}
\newcommand{\rowsp}{\textnormal{rowsp}}
\begin{document}

\title*{Subspace Fuzzy Vault}
\author{Kyle Marshall, Davide Schipani, Anna-Lena Trautmann and Joachim Rosenthal}
\authorrunning{K. Marshall, D. Schipani, A.-L. Trautmann and J. Rosenthal} 
\institute{Kyle Marshall,  Davide Schipani, Joachim Rosenthal \at Institute of Mathematics, University of Zurich, Switzerland\\ 
\email{\{kyle.marshall, davide.schipani, joachim.rosenthal\}@math.uzh.ch}\\
KM and JR were supported by Swiss National Science Foundation Grant no.\ 149716.
\and Anna-Lena Trautmann \at Department of Electrical and Electronic Engineering, University of Melbourne, Australia, and \\Department of Electrical and Computer Systems Engineering, Monash University, Australia \\ \email{anna-lena.trautmann@unimelb.edu.au}\\
ALT was supported by Swiss National Science Foundation Fellowship no.\ 147304. }
%
%
\maketitle

\abstract{
	Fuzzy vault is a scheme providing secure
	authentication based on fuzzy matching of sets. A major application is
	the use of biometric features for authentication, whereby unencrypted
	storage of these features is not an option because of security
	concerns.  While there is still ongoing research around the practical
	implementation of such schemes, we propose and analyze here an
	alternative construction based on subspace codes. This offers
	some advantages in terms of security, as an eventual discovery of the
	key does not provide an obvious access to the features. Crucial for
	an efficient implementation are the computational complexity and the
	choice of good code parameters. The parameters depend on the
	particular application, e.g.\ the biometric feature to be stored and the
	rate one wants to allow for false acceptance.  The developed theory is
	closely linked to constructions of subspace codes studied in the area
	of random network coding.
\vspace{1cm}
}

\section{Introduction}

Fuzzy vault is the term used by Juels and Sudan in \cite{Juels:2006:FVS:1110940.1110956} to describe a cryptographic primitive, in which a key $\kappa$ is hidden by a set of features $A$ in such a way, that any witness $B$, which is close enough to $A$ under the set difference metric, can decommit $\kappa$.  Fuzzy vault is related to the fuzzy commitment scheme of Juels and Wattenberg \cite{Juels:1999:FCS:319709.319714}, which gives a solution for noisy hashing of data for the Hamming distance. This and a dual version of it, the fuzzy syndrome hashing scheme, were considered by the authors in \cite{BBCRS, FMRST,sc10p}.

The motivation for fuzzy vault is related to the growing interest in using fuzzy authentication systems, i.e.\ systems that do not require an exact match, but rather a partial one, between two sets. Instances include the use of biometric features for authentication, personal entropy systems to allow password recovery by answering a set of questions with a level of accuracy above a certain threshold, privacy-protected matching to allow find a match between two parties without disclosing the features in public.       

In early biometric authentication systems, comparison of a biometric was done against an image stored locally on the machine, rather than in some hashed form. For security purposes however, passwords are normally stored in hashed form. Moreover, since biometric data is irreplaceable in the sense that once compromised it cannot be changed, storing the data in un-hashed form can pose a significant security risk \cite{Clancy03securesmartcard-based}. Biometric data is inherently noisy, however, so direct hashing of a user's features would prevent the authentic user from accessing the system, as no error tolerance in the matching would be allowed. Using error correcting techniques, the fuzzy vault is a scheme that can recover a secret key hidden by features even in the presence of noise. Recent advancements have been made in the pre-alignment of biometrics (cf. \cite{Li:2010:AFC:1786801.1786970} and references therein), specifically fingerprints, allowing for comparative methods without storage of the image itself. These advancements make fuzzy vault a promising and feasible cryptographic solution for noisy data.

Recently, much work has been done in the area of error correcting codes in projective space. These codes turn out to be appropriate for error correction in random network coding  \cite{Koetter07codingfor}, and are referred to as error correcting random network codes, projective space codes, or subspace codes. The aim of this paper is to show that the construction of the fuzzy vault in \cite{Juels:2006:FVS:1110940.1110956} can be extended and adapted to work for subspace codes in an analogous way with advantages and limitations. Namely, we present a construction for a fuzzy vault based on constant dimension subspace codes, a class of error correcting codes in projective $n$-space over a finite field $\F_q$. For illustration, an example will be provided by using spread codes, a particular class of subspace codes. 

The rest of the paper is organized as follows: Section \ref{sec:background} provides preliminaries, terminology and refreshes the original fuzzy vault scheme. Section \ref{s:SFV} presents the new scheme based on subspace codes. Section \ref{s:secu} relates to security and examples and lastly Sections \ref{sec:considerations} and  \ref{sec:conclusions} give further considerations and concluding final remarks.

\section{Preliminaries}\label{sec:background}

Denote by $\F_q$ the finite field with $q$ elements, where $q$ is a prime power. The  \emph{set difference metric}
$d_{\Delta}$ is defined as
\[d_{\Delta}(A,B) := |(A\backslash B) \cup (B\backslash A) |    , \quad A,B \subseteq \F_q \]
and the \emph{Hamming metric} $d_H$ is defined as
\[d_H(u,v) := |\{ i \mid u_i \neq v_i \}|    , \quad u=(u_1,\dots,u_n),v=(v_1,\dots,v_n)\in \F_q^n .\]
Let $g_1,\dots, g_n\in \F_q^*$ be distinct elements. A $k$-dimensional \emph{Reed-Solomon code} $\mathcal C\subseteq \F_q^n$ can be defined as 
\[\mathcal C=\{(f(g_1),\dots, f(g_n) ) \mid  f(x) \in \F_q[x], \deg(f) < k  \}.\]
It has minimum Hamming distance $d_{\min,H} (\mathcal C)=n-k+1$ and cardinality $|\mathcal C|=q^{k}$ \cite{ma77}.

A \emph{constant dimension (subspace) code} is a subset of the Grassmannian $\mathcal{G}_q(k,n)$, the set of all $k$-dimensional subspaces of $\F_q^n$. The subspace distance defines a metric on $\mathcal{G}_q(k,n)$, given by 
$$d_S(U,V) := \dim(U+V)-\dim(U\cap V), \quad U,V \in \mathcal{G}_q(k,n)$$
 for $U,V\in \mathcal{G}_q(k,n)$ \cite{Koetter07codingfor}. While finding good subspace codes is still an open research problem, there are many candidates now, including the Reed-Solomon-like and spread code constructions \cite{Koetter07codingfor, DBLP:conf/isit/ManganielloGR08}.
An explicit construction of a spread code can be found in \cite{DBLP:conf/isit/ManganielloGR08}, and it is this construction we use as the definition of a spread code: 
Let $p(x) \in \F_q[x]$ be an irreducible monic polynomial of degree $k$ and $P\in \F_q^{k\times k}$ be its companion matrix. Let $n = ks$ for $s \in \mathbb{N}$. Then, 
$$\mathcal{S} = \{\rowsp(A_1 \mid \cdots \mid A_s)\mid A_i\in \F_q[P] , (A_1 \mid \cdots \mid A_s) \neq (0 \mid \cdots \mid 0)\}$$ 
is called a $(k,n)$-\emph{spread code}, where $\rowsp(A)$ is the row space of a matrix $A$. From the definition, one can see that the minimum subspace distance of a spread code is $d_{\min,S}(\mathcal{ S}) = 2k$ and that the cardinality is $|\mathcal{S}| = \frac{q^n-1}{q^k-1}$. 

For practical purposes we need a unique representation of subspaces, and we will choose their matrix representation in reduced row echelon form (i.e. the matrix  in reduced row echelon form whose row space is the respective subspace) as such. 




We will now briefly revisit the fuzzy vault scheme \cite{Juels:2006:FVS:1110940.1110956}. We will refer to the following description (cf. also \cite{Hartloff}), although we are aware of different interpretations of the scheme throughout the literature, especially in terms of the decoding algorithms and parameters (\cite{Poon_efficient}). Since this scheme  is based on polynomial evaluation, it will henceforth be called the \emph{polynomial fuzzy vault} (PFV) scheme. 

Let $\kappa = (k_0,k_1,...,k_{\ell-1})\in \F_{q}^\ell$ be the secret key and $\kappa(x) = k_0+k_1x + ... k_{\ell-1}x^{\ell-1}\in \F_q[x]$ the corresponding key polynomial. 
 Let $A\subset \F_{q}\backslash \{0\}$ be the set of genuine features with $|A| = t> \ell$. Furthermore, let $\lambda: \F_{q} \rightarrow \F_{q}$ be a random map  such that $\lambda(x)\neq \kappa(x)$ for all $x\in B$. 
Choose $r > t$ and select a set $B\subset \F_{q}\backslash A$ such that  $|B|=r-t$. Construct  the sets 
\begin{align*} 
\mathcal{P}_{auth} &= \{(x,\kappa(x))\mid x\in A\}, \\ 
\mathcal{P}_{chaff} &= \{(x, \lambda(x))\mid x\in B\}, \\
\mathcal{V} &= \mathcal{P}_{auth}\cup \mathcal{P}_{chaff} .
\end{align*} 
We will call $\mathcal{P}_{auth}$ the set of \emph{authentic points},  $\mathcal{P}_{chaff} $ the set of \emph{chaff points} and $\mathcal{V} $ the set of \emph{vault points}. 

The remaining parts of the fuzzy vault are a code and a corresponding error correcting decoding algorithm. The code is the $\ell$-dimensional Reed-Solomon code $\mathcal{C} \subseteq\F_q^{t}$, 
\[\mathcal C=\{(f(g_1),\dots, f(g_t) ) \mid  f(x) \in \F_q[x], \deg(f) < \ell  \},\]
whose defining distinct evaluation points $g_{1},\dots, g_{t}$ are the points in $A$, i.e.\ the genuine features. The key polynomial $\kappa(x)$ gives rise to a codeword of $\mathcal{C}$. 
If a witness attempts to gain access to the key, the witness submits a set  of features $W\subset \F_{q}$. Let $Z\subseteq \mathcal{V}$ be the set of vault points $(x,y)$ with $x\in W$. As the error correction capability of $\mathcal{C}$ is $\lfloor(t-\ell)/2\rfloor$, the witness needs $|Z\cap \mathcal{P}_{auth}|\geq t-\lfloor(t-\ell)/2\rfloor = \lceil(t+\ell)/2\rceil$ to recover $\kappa(x)$ with the decoding algorithm.

To simplify the setting and have a more workable model, assume that $|W|=t$ and that $B=\F_{q}\backslash A$. Then $|Z|=t$ and we can rewrite $d_{\Delta}(A,W)=2t - 2|A\cap W|= 2t-2|Z\cap \mathcal{P}_{auth}|)$. Thus the witness gains access to the key if
$$ d_{\Delta}(A,W)\leq 2t- (t+\ell) \iff d_{\Delta}(A,W)\leq d_{min,H}(\mathcal C)-1.$$

It was shown in \cite{DBLP:conf/biosig/MihailescuMT09} that certain reasonable parameters for the PFV scheme cause the system to be susceptible to a brute force attack. Choi et al.\ in \cite{YongChoi2008725} speed up the attack by using a fast polynomial reconstruction algorithm. These attacks may indicate that additional security measures should be taken to prevent the loss of a user's features.
A different type of security analysis is provided in \cite{Hartloff}.




\section{A Fuzzy Vault Scheme Utilizing Subspace Codes}\label{s:SFV}

We will now explain our new variant of the fuzzy vault scheme, and call this particular implementation the \emph{subspace fuzzy vault} (SFV) scheme. Unlike the PFV scheme in which the key is given by the coefficients of a polynomial, the key $\hat \kappa$ in this scheme is a subspace with a disguised generator matrix $\kappa$ (not in reduced row echelon form). 


\begin{definition}\label{def:SFV}
Let $k\leq n$, $\mathcal{C}\subset \mathcal{G}_q(k,n)$ a constant dimension subspace code, and $\hat{\kappa}\in\mathcal{C}$ a secret subspace. Choose some $\kappa \in \F_q^{k\times n}$ such that $\rowsp(\kappa) = \hat\kappa$. We will hide the key by a set of linearly independent features $A\subset \F_q^k$ with $|A|= k$ and a set $B= \F_q^k\backslash A$. Let $\lambda(x):\F_{q}^{k}\rightarrow \F_{q}^{n}$ be a random map such that $\lambda(x)\not \in \rowsp(\kappa)$ for all $x\in B$. Define the sets
\begin{align*} 
\mathcal{P}_{auth} &= \{(x,x \kappa)\mid x\in A\}, \\ 
\mathcal{P}_{chaff} &= \{(x, \lambda(x))\mid x\in B\}, \\
\mathcal{V} &= \mathcal{P}_{auth}\cup \mathcal{P}_{chaff}.
\end{align*} 
$\mathcal{P}_{auth}$ is called the set of \emph{authentic points}, $\mathcal{P}_{chaff}$ is called the set of \emph{chaff points}, and $\mathcal{V} $ the set of \emph{vault points}. 
\end{definition}







In order for a witness to decommit $\hat{\kappa}$, a set $W \subset \F_{q}^k$ is submitted and the second coordinates of the elements in the vault whose first coordinates correspond to $W$ are used to generate a subspace $W'$. 
This subspace is then decoded to yield a codeword $U\in \mathcal C$. We assume that $W$ consists of at most $k$ linearly independent features.

For a set $S\subset \F_q^k$, we will denote by $\langle S\rangle_\kappa$ the subspace spanned by the elements $\{s\kappa\mid s\in S\}$. 
We will also assume $\dim(W')=|W|$, although this may not happen, introducing some probability of error, as we mention below. The assumption is justified by estimating its probability using counting formulas like that in the following Lemma \ref{l:spanspace}, whilst supposing $n$ big enough and the second coordinates of the chaff points being randomly chosen within their domain.



\begin{theorem}\label{eq:sfvdistance} 
In the setting of Definition \ref{def:SFV}, the vault recovers the key $\hat{\kappa}$ if  and only if
\begin{equation*}
d_{\Delta}(A,W) \leq \frac{1}{2}(d_{\min,S}(\mathcal{C})- 1).
\end{equation*} 
\end{theorem}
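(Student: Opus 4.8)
The plan is to reduce the statement to the standard error-correction guarantee for the subspace code $\mathcal{C}$ by establishing one clean identity: the subspace distance between the decoding input $W'$ and the true codeword $\hat\kappa$ equals the set difference distance between $A$ and $W$. Once $d_S(W',\hat\kappa) = d_\Delta(A,W)$ is in hand, the theorem follows from the fact that a bounded-distance decoder for a code of minimum distance $d_{\min,S}(\mathcal{C})$ returns $\hat\kappa$ precisely when its input lies within the decoding radius $\tfrac{1}{2}(d_{\min,S}(\mathcal{C})-1)$.

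First I would fix the bookkeeping. Write $a = |W \cap A|$ for the number of authentic features submitted and $b = |W \setminus A|$ for the number of chaff features, so $|W| = a+b$. Splitting $W'$ accordingly, $W' = \langle W \cap A\rangle_\kappa + U_2$, where $U_2$ is spanned by the chaff second coordinates attached to $W \setminus A$. Since $\rowsp(\kappa) = \hat\kappa$ has dimension $k$, the matrix $\kappa$ has full row rank, so the map $x \mapsto x\kappa$ is injective and carries the linearly independent set $W \cap A$ to a linearly independent subset of $\hat\kappa$; hence $\langle W \cap A\rangle_\kappa \subseteq \hat\kappa$ has dimension $a$. Under the working assumption $\dim(W') = |W|$ together with the genericity of the chaff coordinates (which, as discussed before the statement, holds with high probability for $n$ large), $U_2$ meets $\hat\kappa$ trivially, and one checks $W' \cap \hat\kappa = \langle W\cap A\rangle_\kappa$, so that $\dim(W'\cap\hat\kappa) = a$ and $\dim W' = a+b$.

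With these dimensions I would compute the two quantities directly. On the coding side, $d_S(W',\hat\kappa) = \dim W' + \dim\hat\kappa - 2\dim(W'\cap\hat\kappa) = (a+b) + k - 2a = k + b - a$. On the feature side, $A \setminus W$ and $W \setminus A$ are disjoint, so $d_\Delta(A,W) = |A\setminus W| + |W\setminus A| = (k-a) + b = k + b - a$. These agree, establishing $d_S(W',\hat\kappa) = d_\Delta(A,W)$, which is the heart of the argument.

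Finally I would close the equivalence, noting that since $d_\Delta(A,W)$ is an integer the half-integer bound is the same as requiring the integer decoding radius. For the ``if'' direction, if $d_S(W',\hat\kappa) \le \tfrac{1}{2}(d_{\min,S}(\mathcal{C})-1)$, then for every other codeword $U \in \mathcal{C}$ the triangle inequality gives $d_S(W',U) \ge d_{\min,S}(\mathcal{C}) - d_S(W',\hat\kappa) > d_S(W',\hat\kappa)$, the last step using $d_S(W',\hat\kappa) < \tfrac12 d_{\min,S}(\mathcal{C})$; thus $\hat\kappa$ is the unique closest codeword and is returned. For the ``only if'' direction, a bounded-distance decoder of radius $\tfrac{1}{2}(d_{\min,S}(\mathcal{C})-1)$ outputs a codeword only when its input is within that radius, so if the distance is exceeded then $\hat\kappa$ cannot be output. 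The main obstacle is really the middle step, and within it the claim $\dim(W'\cap\hat\kappa) = a$: everything rides on the chaff second coordinates contributing nothing to the intersection with $\hat\kappa$, which is exactly where the probabilistic genericity assumption enters and where the scheme's (small) residual error probability originates.
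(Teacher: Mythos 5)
Your proposal is correct and follows essentially the same route as the paper: both hinge on the identity $d_S(W',\hat{\kappa}) = d_{\Delta}(A,W)$, obtained by decomposing $W'$ into its intersection with $\hat{\kappa}$ (spanned by the images of $W\cap A$, using linear independence of the features and full row rank of $\kappa$) plus a chaff part assumed to meet $\hat{\kappa}$ trivially, and then invoke unique decoding within half the minimum subspace distance. Your version merely makes the $|W\cap A|$, $|W\setminus A|$ bookkeeping and the bounded-distance decoding argument more explicit, and correctly flags the genericity of the chaff coordinates as the point where the residual error probability enters.
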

\begin{proof}
We can express $W' = (W' \cap \hat{\kappa}) \oplus E$ for some subspace $E\subset \F_q^n$. 
As shown in \cite{Koetter07codingfor},  we can uniquely recover $\hat{\kappa}$ from $W'$ if and only if $ d_S(W', \hat{\kappa}) \leq \frac{1}{2}(d_{\min,S}(\mathcal{C}) - 1) $. 
%

%
Using properties of the rank and linear algebra identities, we get
\begin{align*} 
d_{\Delta}(A,W) &= |W\setminus A| + |A\setminus W|\\
 &=  \dim(\langle W\setminus A\rangle_\kappa) + \dim(\langle A\setminus W\rangle_\kappa) \\ 
&= \dim(\langle W\setminus A\rangle_\kappa) + k-\dim(\langle A\cap W\rangle_\kappa) \\ 
&= \dim(E) + k- \dim(\hat{\kappa}\cap W' ) \\
&= d_S(W', \hat{\kappa}).
\end{align*} 
Indeed, as $|W|\leq k$, $|A|= k$, and $W$ and $A$ are sets of linearly independent features, Sylvester's rank inequality implies $|W\setminus A|\leq\dim(\langle W\setminus A\rangle_\kappa)$, while the inequality in the other direction is obvious, therefore $|W\setminus A|=\dim(\langle W\setminus A\rangle_\kappa)$; similarly we have $|A\setminus W|=\dim(\langle A\setminus W\rangle_\kappa)$ and $|A\cap W|=\dim(\langle A\cap W\rangle_\kappa)$. Also $\dim(\hat{\kappa}\cap W' )=|A\cap W|$, as the second coordinates of $W\setminus A$ generate a subspace which does not intersect $\hat{\kappa}$ by definition of $\mathcal{P}_{chaff}$ and given that $B= \F_q^k\backslash A$.

Overall, it follows that $d_{\Delta}(A,W)= d_S(W', \hat{\kappa})$, and therefore  we can uniquely decode $W'$ to $\hat{\kappa}$ as soon as the set difference between $A$ and $W$ is at most $\frac{1}{2}(d_{\min,S}(\mathcal{C}) -1)$.
\qed
\end{proof}

\subsection{Variants of the scheme}

In order to loosen the constraints on the choice of parameters, other settings and scheme variants can be considered, although some probability of error may be introduced.

For example, we can allow $|A|=|W|=t\geq k$, with the features thought as randomly chosen in the ambient space rather than linearly independent. Other looser assumptions include also $B$ being a proper subset of $ \F_q^k\backslash A$.

In these cases, one needs to compare $\dim(\hat{\kappa}\cap W' )$ with $|A\cap W|$ and $\dim(\hat{\kappa}+ W' )$ with $|A\cup W|$. For example $\dim(\hat{\kappa}\cap W' )$ is no bigger than $k$ while $|A\cap W|$ would be no bigger than $t$;  $|A\cup W|$ counts elements of $A$ which do not contribute to the dimension of $\hat{\kappa}$; $\dim(W')$ may not be equal to $|W|$ and the looser assumption on $B$ may reduce the dimension of $W'$ even more, introducing further variability.

Depending on the assumptions and parameters, one can expect to have bounds of the form:
$$
d_{\Delta}(A,W)-\delta_1\leq  d_S(W', \hat{\kappa})\leq d_{\Delta}(A,W)+\delta_2,
$$
for some $\delta_1,\delta_2 \in \mathbb N$. Depending on the given threshold for $d_{\Delta}(A,W)$, one can estimate the probability of falsely accepting or falsely rejecting the witness.

To be more precise, with the above mentioned looser assumptions, we get $\dim(\hat{\kappa}) = k =|A|-(t-k)$, $\dim(W')\leq |W|$ and $\dim(\hat{\kappa} \cap W') \leq |A\cap W|$. If $y$ is an upper bound on the difference between $|A\cap W|$ and the maximum number of linearly independent elements within $A\cap W$ (i.e. $y=0$ for the hypothesis of Theorem~\ref{eq:sfvdistance}), we have on one side
\begin{align*} 
 d_S(W', \hat{\kappa}) &= \dim(\hat{\kappa})+\dim( W' )-2\dim(\hat{\kappa}\cap W' )\\
&\leq |A|-(t-k)+ |W|-2(|A\cap W|-y) \\ 
&= d_{\Delta}(A,W)- (t-k)+2y.
\end{align*} 
On the other side, if $z$ is an upper bound for $|W|-\dim(W')$, we get
\begin{align*} 
 d_S(W', \hat{\kappa}) &= \dim(\hat{\kappa})+\dim( W' )-2\dim(\hat{\kappa}\cap W' )\\
&\geq |A|-(t-k)+ |W|-z-2(|A\cap W|) \\ 
&= d_{\Delta}(A,W)- (t-k)-z .
\end{align*} 
Note that $z$ depends on the assumptions on the size of $B$ and on the choice of chaff points and the parameter $n$, as discussed in the first part of Section \ref{s:SFV}. I.e.\ $z$ can be neglected if $n$ is big enough, $B$ is the complement to $A$, and the chaff points are randomly chosen.
Similar bounds can also be obtained for $t<k$.

Incidentally, these inequalities provide an alternative proof to Theorem \ref{eq:sfvdistance}.


\section{Security and Examples}\label{s:secu}

Notice that we can use $n$ as a degree of freedom to enlarge the size of the key space. 

We know the following fact from \cite{Laksov:countingmatrices}: 

\begin{lemma}\label{l:spanspace}
	Let $k\leq \delta \leq n$. The number of $\delta\times n$ matrices over $\F_q$ with rank $k$ is given by \begin{equation}\label{eq:spanspace}N_q(k,\delta,n) =  \dfrac{\left(\prod_{i=0}^{k-1}q^n-q^i\right)\left(\prod_{i=0}^{k-1}q^\delta-q^i\right)}{\prod_{i=0}^{k-1}q^k-q^i}.\end{equation}
\end{lemma}

With $\delta=\kappa$ we can see that we can play on $n$ to make this number grow as we please, in order to make it hard searching for the right set of $k$ linearly independent features.


Moreover, the complexity of  such a brute force attack should be combined with the difficulty of determining the rank of an arbitrary $k\times n$ matrix over $\F_q$. The naive approach, using Gauss� algorithm, requires at most $n(k^2-k)$ field operations, and  in case the field is $\F_2$ at most $n(k^2-k)/2$. There exist fast algorithms for determining the rank of a matrix but these are only asymptotically better and are often much worse for small values of $k$ and $n$.

\subsection{Other attacks}

When $|A|=t>k$, not only may some difficulty in decoding arise, but if $t$ is much bigger than k, other types of brute force attacks may be devised.
In the following a strategy is described which tries to find a set in $\F_q^n$ containing $k$ linearly independent vectors that are meant to reveal the authentic features.

Assume now to have $t$ authentic points and $r-t$ chaff points, with the set of features $\{x_1,...,x_t\}$ being a set of random elements of $\F_q^k$. We can assume that the second coordinates of the authentic set $\{x_1\kappa, ..., x_t\kappa\}$ contain a set of $k$ linearly independent vectors in $\F_q^n$. Indeed, given Lemma \ref{l:spanspace}, we can compute the probability that $x_1\kappa, ... , x_t\kappa$ contains a set of $k$ linearly independent vectors as $$\frac{N_q(k,k,t)}{q^{kt}},$$
that is the probability that $(x_1, ..., x_t)^T$ is a rank $k$ matrix. For common vault parameters, and especially for larger $t$, this value is close to $1$, so as to justify our assumptions.






Now, the expected number of subsets of size $\delta$ out of $r>\delta$ random points in $\F_q^n$ that span a $k$-dimensional space can be estimated as \begin{equation}\label{eq:alpha} \alpha_q(k,\delta,n) = \dfrac{{r\choose \delta}N_q(k,\delta,n) }{ q^{\delta n}}.\end{equation} Ideally, an attacker would want to find a $\delta_0 \leq |A|=t$ 
 so that $\alpha_q(k,\delta_0,n)<1$ in order to have a high probability of recovering the key in the event that the $\delta_0$ points span a space of dimension $k$. On the other side, to counter this type of attack, one tries to keep $k$ very close to $t$ and $r$ big enough, so that $\alpha_q$ does not get small.

We will approximate the complexity of a brute force attack following this approach. The attack is similar in approach to that proposed in \cite{Juels:2006:FVS:1110940.1110956} and depends on finding a suitable $\delta_0$, so that the probability of $\delta_0$ random vectors in $\F_q^n$ spanning a subspace of dimension $k$ is small.



It is noted in \cite{DBLP:conf/biosig/MihailescuMT09} that the average number of attempts for a user to guess $\delta$ points in the authentic set is ${r\choose\delta}/{t\choose\delta} < 1.1(r/t)^\delta$ for $r>t>5$. Given that it takes $n(\delta^2-\delta)/2$ operations to row reduce a $\delta\times n$ binary matrix, we obtain the following upper bound for the expected time to recover the key.

\begin{lemma}\label{l:easybound}
	In the above settings, let $\delta_0$ be so that $\alpha_2(k,\delta_0,n) < 1$ from equation \eqref{eq:alpha}. On average, an attacker can recover the secret key in $C\cdot (r/t)^{\delta_0}$ operations, where $C < 0.55\cdot n(\delta_0^2-\delta_0)$. 
\end{lemma}

\subsection{Example using spread codes}\label{sec:example}


As an example of how to construct a vault using subspace codes, we will use spread codes, as defined in Section \ref{sec:background}. 

Spread codes are somewhat restrictive in that the minimum distance is completely determined by $k$, unlike other subspace codes where one can trade off the distance with other parameters. Nevertheless we illustrate the construction using spread codes because of their simplicity.

\begin{example}
Let us assume that the features belong to $\F_2^{16}$, so that $k=16$. In this case, we can recover the key if and only if the set difference is at most $15$. 
 We are free to choose $n$ as long as it is a positive integer multiple of $k$. For example we can choose $n = 96$ so that we have roughly $2^{80}$ keys. 

\end{example}

Note that an $(n,k)_q$ spread code can be decoded in $\mathcal{O}((n-k)k^5)$ field operations over $\F_{q}$, as shown in \cite{go12}. For more information on spread codes and other decoding algorithms, the reader is referred to \cite{go12,DBLP:conf/isit/ManganielloGR08,ma11j}.

\section{Further Considerations}\label{sec:considerations}

One of the disadvantages of using a biometric for security is that once an attacker knows a user's features, the user can never use a biometric scheme based on those features again. 
In the PFV finding the key is essentially equivalent to finding the features, as they are immediately retrievable as the first coordinates of the points in the authentic set, i.e. by testing whether these correspond to evaluations of the key polynomial. In the SFV, instead, an attacker who is capable of obtaining $\hat\kappa$, has no big advantage in recovering $x_1, ..., x_t$ from $x_1\kappa, ... ,x_t\kappa$, not knowing which particular $\kappa$ was used to generate the second coordinates of the authentic points. 
Ideally, to make the system even more resilient, the user should have the features obscured, for instance one might want to store in the vault a hash of the features, instead of the features themselves, as
\begin{align*} 	 
 \mathcal{P}_{auth} &= \{(h(x), x\kappa) \mid x \in A\} \\ 
	 \mathcal{P}_{chaff} &= \{(h(x), \lambda(x))\mid x\in B\},
\end{align*} 
for a suitable hash function $h$.
There is also another important reason to use hashes as above in the system. In fact, suppose that an attacker finds an element in the unhashed version of the vault whose first coordinate is a linear combination of other first coordinates of other elements in the vault. Then he can check whether its second coordinate is also a linear combination (with the same coefficients) of the corresponding second coordinates of the other elements. If this happens he can argue that the element belongs to $\mathcal{P}_{auth}$. Clearly also this attack can be prevented by taking $t$ close to $k$, besides using an hash function to hide the first coordinates.

\section{Conclusions}\label{sec:conclusions}

We have proposed a new authentication scheme based on noisy data like biometric features. The idea has similarities with the fuzzy vault scheme and works in the set difference metric, but it exploits the new setting of subspace codes. We have presented a main theorem with two alternative proofs that shows under which distance conditions authentication succeeds with respect to the code parameters. We have also showed the possibility of considering a few variants based on slightly different assumptions and how the main theorem can be generalized. This can allow more flexibility for the choice of parameters and for future applications.  The security of the scheme has been analyzed, whereby brute force attacks require bigger computational costs compared with traditional schemes. 
This however comes with a price, that is the computational complexity of state of the art decoding schemes for subspace codes is also rather high. There are also a few other nice features of the new scheme, for example its resilience to exposing the features even if the key were compromised.

Future research includes enhancing the scheme or devising alternative schemes based on subspace codes that would enable more efficient and flexible parameter profiles or decoding scenarios. Also considering examples with families of codes other than spread codes may help suggest future steps towards an actual deployment in practice.

\section{Acknowledgments}
The authors would like to thank Marco Bianchi and Natalia Silberstein for fruitful discussions regarding this work.

\bibliography{biblio}{}
\bibliographystyle{spbasic}

\end{document}